\newcommand{\C}{\mathbb{C}}
\begin{document}

\newtheorem{theorem}{Theorem}
\newtheorem{proposition}{Proposition}

\bibliographystyle{plain}

\title{Frames for Subspaces of $\C^N$}
\author[M. Hirn, D. Widemann]{Matthew Hirn, David Widemann}
\address{Norbert Wiener Center, University of Maryland, College Park, MD 20742}
\date{November 19, 2007}
\keywords{Frames, frame potential}
\begin{abstract}
We present a theory of finite frames for subspaces of $\C^N$. The definition of a subspace frame is given and results analogous to those from frame theory for $\C^N$ are proven.
\end{abstract}
\maketitle

\section{Introduction}

Frames have been used in many applications of signal processing. They give stable signal representations and allow modelling for noisy environments. Recent work with hyperspectral data has shown the need to consider subspaces of high dimensional spaces. In particular, ongoing work has led us to investigate frames for these subspaces.

\section{Frames for $\C^N$}

Let $\Phi = \{\varphi_j\}_{j=1}^s \subset \C^N$, where $s \geq N$. $\Phi$ is a {\it finite frame} for $\C^N$ if there exist constants $A,B > 0$ such that
\begin{equation}\label{eqn:frame}
A\lVert f \rVert^2 \leq \sum_{j=1}^s |\langle f,\varphi_j\rangle |^2 \leq B\lVert f\rVert^2, \qquad \forall \enspace f \in \C^N.
\end{equation}
The numbers $A,B$ are called the {\it frame bounds}. It is a well known fact that any spanning set is a frame for $\C^N$, while every frame is indeed a spanning set. A frame is {\it tight} if one can choose $A = B$ in the definition, i.e., if
\begin{equation}\label{eqn:tight}
\sum_{j=1}^s |\langle f,\varphi_j\rangle |^2 = A\lVert f\rVert^2, \qquad \forall \enspace f \in \C^N.
\end{equation}
Finally, a frame is {\it unit norm} if
\begin{equation}\label{eqn:unit norm}
\lVert \varphi_j \rVert = 1, \qquad \forall \enspace j=1,\ldots ,s.
\end{equation}
If $\Phi$ satisfies (\ref{eqn:frame}), (\ref{eqn:tight}), and (\ref{eqn:unit norm}), then we say $\Phi$ is a {\it finite unit norm tight frame} (FUNTF) for $\C^N$. In this case, the frame bounds satisfy $A = B = s/N$. In particular, if $\Phi$ is a FUNTF with frame bounds $A = B = 1$, then $\Phi$ is an orthonormal basis.\\

Assume now that $\Phi = \{\varphi_j\}_{j=1}^s$ is a frame for $\C^N$. The {\it analysis operator} of $\Phi$ is defined as follows:
\begin{equation*}\label{eqn:adjoint}
L:\C^N \rightarrow \C^s, \quad Lf := \{\langle f,\varphi_j\rangle\}_{j=1}^s.
\end{equation*}
The {\it synthesis operator} is given by:
\begin{equation*}\label{eqn:synthesis}
L^{\star}:\C^s \rightarrow \C^N, \quad L^{\star}\{c_j\}_{j=1}^s = \sum_{j=1}^s c_j\varphi_j.
\end{equation*}
One obtains the {\it frame operator} by composing $L^{\star}$ with $L$:
\begin{equation*}\label{eqn:frame operator}
S:\C^N \rightarrow \C^N, \quad Sf = L^{\star}Lf = \sum_{j=1}^s \langle f,\varphi_j\rangle\varphi_j
\end{equation*}
Some important properties of $S$ are the following:
\begin{enumerate}
\renewcommand{\labelenumi}{(\roman{enumi})}
\item
$S$ is invertible and self-adjoint.
\item
Every $f \in \C^N$ can be represented as
\begin{equation}\label{eqn:rep formula}
f = \sum_{j=1}^s \langle f,S^{-1}\varphi_j\rangle\varphi_j = \sum_{j=1}^s \langle f,\varphi_j\rangle S^{-1}\varphi_j.
\end{equation}
\item
$\Phi$ is a tight frame if and only if $S = AI$.
\end{enumerate}
Based on equation (\ref{eqn:rep formula}), one defines the {\it dual frame} of $\Phi$ as $\tilde{\Phi} = \{\tilde{\varphi}_j\}_{j=1}^s := \{S^{-1}\varphi_j\}_{j=1}^s$; the frame operator of $\tilde{\Phi}$ is $S^{-1}$. If $\Phi$ is a tight frame for $\C^N$, then $S^{-1} = \frac{1}{A}I$, and the representation formula is simple:
\begin{equation*}\label{eqn:rep formula funtf}
f = \frac{1}{A}\sum_{j=1}^s \langle f,\varphi_j\rangle\varphi_j = \frac{1}{A}\sum_{j=1}^s \langle f,\varphi_j\rangle\varphi_j.
\end{equation*}

\section{Frames for Subspaces of $\C^N$}

Let $\Phi = \{\varphi_j\}_{j=1}^s \subset \C^N$ and let $W$ be a subspace of $\C^N$ of dimension $r < N$. We say $\Phi$ is a {\it finite subspace frame} for $W$ if $\mathrm{span}(\Phi) = W$. It is clear from this definition that there exist constants $A,B > 0$ such that
\begin{equation}\label{eqn:subframe}
A\lVert f \rVert^2 \leq \sum_{j=1}^s |\langle f,\varphi_j\rangle |^2 \leq B\lVert f\rVert^2, \qquad \forall \enspace f \in W.
\end{equation}
We note that if we had instead used (\ref{eqn:subframe}) as our definition, then it would not necessarily imply that $\mathrm{span}(\Phi) = W$ but rather that $\mathrm{span}(\Phi) \supseteq W$. The unit norm property as well as the notion of a tight frame remain similar in this setting. More specifically, if we can take $A = B$ in (\ref{eqn:subframe}) then we call $\Phi$ a {\it tight subspace frame}. Finally, if $\Phi$ is a finite unit norm tight subspace frame, then we say $\Phi$ is a {\it subspace FUNTF}.\\

We define $L$, $L^{\star}$, and $S$ exactly the same as in section 1, however we note that the properties of these maps change for subspace frames. In particular, we see:
\begin{enumerate}
\renewcommand{\labelenumi}{(\alph{enumi})}
\item
$L:\C^N \rightarrow \C^s$ is no longer injective, but rather $\mathrm{ker}(L) = (\C^N\setminus W) \cup \{0\}$.
\item
$L^{\star}:\C^s \rightarrow \C^N$ is no longer surjective, but rather $\mathrm{image}(L^{\star}) = W$.
\item
Based on (a) and (b), we see that $S:\C^N \rightarrow \C^N$ is no longer invertible.
\end{enumerate}
Because of (c), none of properties (i) - (iii) from section 1 hold for subspace frames. The question then becomes: in what sense do subspace frames satisfy properties (i) - (iii) above? Theorems below show that subspace frames satisfy natural modifications of the above properties.\\

Let $W_{on}$ be a set of $r$ orthonormal vectors such that $\mathrm{span}(W_{on}) = W$. We will also consider $W_{on}$ as an $N \times r$ matrix where the columns of this matrix are the vectors in the set $W_{on}$. We define $\Phi_W$ to be the $r \times s$ matrix whose columns are the coordinates of $\Phi$ in $W_{on}$; that is:
\begin{equation}\label{eqn:coordsubframe}
\Phi_W := W_{on}^{\star}\Phi,
\end{equation}
where we have implicitly used the matrix form of $\Phi$, that is the $N \times s$ matrix whose columns are the elements of $\Phi$. The $j^{\text{th}}$ column of $\Phi_W$ is the projected $W$-subspace coordinates of $\Phi$.
\begin{proposition}\label{prop:coordframe}
The set $\Phi_W$ consisting of the columns of the matrix $\Phi_W$ is a frame for $\C^r$.
\end{proposition}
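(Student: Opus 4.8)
The plan is to verify the frame inequality \eqref{eqn:subframe} directly for the columns of $\Phi_W$, transferring it from the subspace frame inequality that $\Phi$ already satisfies on $W$. The two facts I would use about $W_{on}$ are that its columns are orthonormal, so $W_{on}^{\star}W_{on} = I_r$ and hence $W_{on}$ is an isometry from $\C^r$ into $\C^N$, and that its columns span $W$, so $W_{on}g \in W$ for every $g \in \C^r$. I would also keep in mind the fact quoted in Section 2 that a spanning set is automatically a frame, but the direct computation has the advantage of producing explicit bounds.

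First I would fix $g \in \C^r$ and rewrite the frame sum for $\Phi_W$ using the adjoint identity $\langle g, W_{on}^{\star}\varphi_j\rangle = \langle W_{on}g, \varphi_j\rangle$, valid since $(W_{on}^{\star})^{\star} = W_{on}$. This gives
\[
\sum_{j=1}^s |\langle g, W_{on}^{\star}\varphi_j\rangle|^2 = \sum_{j=1}^s |\langle W_{on}g, \varphi_j\rangle|^2 .
\]
Because $W_{on}g$ lies in $W$, I may apply the subspace frame bounds \eqref{eqn:subframe} with $f = W_{on}g$ to bound the right-hand side between $A\lVert W_{on}g\rVert^2$ and $B\lVert W_{on}g\rVert^2$.

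Next I would simplify $\lVert W_{on}g\rVert^2 = \langle W_{on}^{\star}W_{on}g, g\rangle = \lVert g\rVert^2$, using orthonormality of the columns of $W_{on}$. Chaining these estimates yields $A\lVert g\rVert^2 \leq \sum_{j=1}^s |\langle g, W_{on}^{\star}\varphi_j\rangle|^2 \leq B\lVert g\rVert^2$ for all $g \in \C^r$, which is exactly the frame condition for $\Phi_W$ with the same bounds $A, B$. Since $A, B > 0$, this shows $\Phi_W$ is a frame for $\C^r$.

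I do not expect a serious obstacle here; the one point that needs care is that the inequality \eqref{eqn:subframe} is only assumed on $W$, so the argument depends on $W_{on}g \in W$, which in turn relies on $\mathrm{span}(\Phi) = W$ forcing $W_{on}$ to have column space exactly $W$. An alternative, even shorter route would be to observe that $W_{on}^{\star}$ restricts to a linear isomorphism $W \to \C^r$ (with inverse $W_{on}$), so it carries the spanning set $\Phi$ of $W$ onto a spanning set of $\C^r$, and then to invoke the quoted fact that every spanning set is a frame. I would likely present the direct version, since it simultaneously identifies the frame bounds of $\Phi_W$ as those of the original subspace frame.
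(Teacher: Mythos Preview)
Your argument is correct and takes a different route from the paper. The paper proves the proposition by a rank argument: since $\ker(W_{on}^{\star}) \cap W = \{0\}$ and $\mathrm{span}(\Phi) = W$, the matrix $W_{on}^{\star}\Phi$ has rank $r$, so its columns span $\C^r$ and hence form a frame. This is exactly the ``alternative, even shorter route'' you sketch at the end. Your primary argument instead transfers the frame inequality~\eqref{eqn:subframe} through the isometry $W_{on}$, which has the advantage of showing that $\Phi_W$ inherits the \emph{same} frame bounds $A,B$ as $\Phi$ --- information the paper's spanning argument does not yield, but which is implicitly used later (e.g., in Theorem~\ref{thm:subfuntf equiv}).

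One small remark on your commentary: the fact that $W_{on}g \in W$ is a consequence of the \emph{definition} of $W_{on}$ as an orthonormal basis for $W$, not of the hypothesis $\mathrm{span}(\Phi) = W$. The latter hypothesis is what guarantees the inequality~\eqref{eqn:subframe} holds, but the column space of $W_{on}$ is $W$ regardless of $\Phi$. This misattribution does not affect the validity of your proof.
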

\begin{proof}
Since $\mathrm{span}(W_{on}) = W$, we have $\mathrm{ker}(W_{on}^{\star}) \cap W = \{0\}$. Therefore, since $\mathrm{span}(\Phi) = W$ as well, we see that $W_{on}^{\star}\Phi$ has rank $r$.
\end{proof}
We denote the analysis, synthesis, and frame operators of $\Phi_W$ by $L_W$, $L_W^{\star}$, and $S_W$, respectively. In terms of the analysis operator, $L$, for $\Phi$, $L_W = LW_{on}$. By proposition \ref{prop:coordframe} we see that $S_W$ will satisfy (i) - (iii).
\begin{theorem}\label{thm:subfuntf equiv}
$\Phi$ is a subspace FUNTF for $W$ with frame bound $A$ if and only if $\Phi_W$ is a FUNTF for $\C^r$ with frame bound $A$.
\end{theorem}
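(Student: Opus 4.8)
The plan is to exploit the fact that $W_{on}$ is an isometry from $\C^r$ onto $W$, so that the frame inequality, the tightness condition, and the unit norm condition can all be transported back and forth between $W$ and $\C^r$ as exact identities. Because each condition transfers as an equivalence rather than just an implication, both directions of the ``if and only if'' will fall out of a single computation.

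First I would record the two matrix identities that follow from the columns of $W_{on}$ being orthonormal: $W_{on}^{\star}W_{on} = I_r$ and $W_{on}W_{on}^{\star} = P_W$, the orthogonal projection of $\C^N$ onto $W$. These show that $g \mapsto W_{on}g$ is a bijection from $\C^r$ onto $W$ satisfying $\|W_{on}g\| = \|g\|$, and that $W_{on}^{\star}$ restricted to $W$ is its isometric inverse. In particular, every $f \in W$ can be written uniquely as $f = W_{on}g$ with $g = W_{on}^{\star}f$ and $\|f\| = \|g\|$.

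Next I would transfer the frame sum. Writing $f = W_{on}g$ and using the adjoint relation, $\langle f,\varphi_j\rangle = \langle W_{on}g,\varphi_j\rangle = \langle g, W_{on}^{\star}\varphi_j\rangle$, where $W_{on}^{\star}\varphi_j$ is precisely the $j^{\text{th}}$ column of $\Phi_W$. Hence $\sum_{j=1}^s |\langle f,\varphi_j\rangle|^2 = \sum_{j=1}^s |\langle g,(\Phi_W)_j\rangle|^2$ for the corresponding $f \in W$ and $g \in \C^r$. Combining this with $\|f\| = \|g\|$ shows that (\ref{eqn:subframe}) holds on $W$ with constants $A,B$ exactly when (\ref{eqn:frame}) holds for $\Phi_W$ on all of $\C^r$ with the same $A,B$; taking $A = B$ then transfers the tight condition and pins down the bound.

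Finally I would handle the unit norm property. Since each $\varphi_j$ lies in $W$ and $W_{on}^{\star}$ is isometric on $W$, we get $\|(\Phi_W)_j\| = \|W_{on}^{\star}\varphi_j\| = \|\varphi_j\|$, so $\Phi$ is unit norm precisely when $\Phi_W$ is. Assembling the three equivalences yields the claim. I do not expect a serious obstacle; the only points requiring care are that the frame inequality for $\Phi$ is imposed only on $W$, which is exactly the image of the isometry, so that restricting to $W$ corresponds to testing $\Phi_W$ against all of $\C^r$, and that the unit norm step genuinely uses $\varphi_j \in W$ rather than isometry of $W_{on}^{\star}$ on all of $\C^N$, where that isometry fails.
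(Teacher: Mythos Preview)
Your proposal is correct and follows essentially the same approach as the paper: both arguments rest on the adjoint identity $\langle W_{on}g,\varphi_j\rangle = \langle g, W_{on}^{\star}\varphi_j\rangle$ together with $W_{on}^{\star}W_{on}=I_r$ to transport the tight frame identity between $W$ and $\C^r$. The differences are organizational rather than substantive: the paper treats the two directions separately via the frame operator equation $S_W = AI$, whereas you package the computation as a single chain of equivalences; you also explicitly verify the unit norm transfer $\|(\Phi_W)_j\|=\|\varphi_j\|$ using $\varphi_j\in W$, a point the paper's proof leaves implicit.
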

\begin{proof}
We do the forward direction first: let $g \in \C^r$, then:
\begin{eqnarray*}
\langle S_Wg,g\rangle & = & \langle L_Wg,L_Wg\rangle\\
& = & \langle \Phi^{\star}W_{on}g,\Phi^{\star}W_{on}g\rangle\\
& = & \sum_{j=1}^s |\langle W_{on}g,\varphi_j\rangle|^2\\
& = & A\lVert W_{on}g\rVert^2\\
& = & A\langle W_{on}g,W_{on}g\rangle
\end{eqnarray*}
Therefore we have:
\begin{eqnarray*}
\langle S_Wg,g\rangle - A\langle W_{on}g,W_{on}g\rangle & = & 0 \qquad \Longrightarrow\\
\langle S_Wg,g\rangle - A\langle W_{on}^{\star}W_{on}g,g\rangle & = & 0 \qquad \Longrightarrow\\
\langle g,(S_W - AI)g\rangle & = & 0 \qquad \Longrightarrow\\
S_W & = & AI
\end{eqnarray*}
For the reverse direction, let $f \in W$. There exists $g \in \C^r$ such that $W_{on}g = f$. Therefore,
\begin{eqnarray*}
A\lVert f \rVert^2 & = & A \langle f,f\rangle \\
& = & A\langle W_{on}g, W_{on}g\rangle \\
& = & \langle Ag, g \rangle \\
& = & \langle S_Wg,g \rangle \\
& = & \langle W_{on}^{\star}L^{\star}LW_{on}g,g \rangle \\
& = & \langle LW_{on}g,LW_{on}g \rangle \\ 
& = & \langle Lf, Lf \rangle \\
& = & \sum_{j= 1}^{s}|\langle f,\varphi_j \rangle|^2
\end{eqnarray*}
\end{proof}
We define the dual frame of $\Phi_W$ in the usual way, that is $\tilde{\Phi}_W = S_W^{-1}\Phi_W$. We now define the {\it dual subspace frame} of $\Phi$ as follows:
\begin{equation}\label{eqn:dual subframe}
\tilde{\Phi} := W_{on}\tilde{\Phi}_W = W_{on}S_W^{-1}W_{on}^{\star}\Phi.
\end{equation}
As the name implies, the set $\tilde{\Phi} = \{\tilde{\varphi}_j\}_{j=1}^s = \{W_{on}S_W^{-1}W_{on}^{\star}\varphi_j\}_{j=1}^s$ will have the following properties:
\begin{proposition}\label{prop:subframe}
$\tilde{\Phi}$ is a subspace frame for $W$.
\end{proposition}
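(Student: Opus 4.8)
The plan is to reduce the claim to the very definition of a subspace frame: by the definition given above, $\tilde{\Phi}$ is a subspace frame for $W$ precisely when $\mathrm{span}(\tilde{\Phi}) = W$. So the entire task is to verify that the span of the columns of the matrix $\tilde{\Phi} = W_{on}S_W^{-1}W_{on}^{\star}\Phi$ is exactly $W$; there are no frame-bound estimates to chase, since spanning automatically yields constants $A,B>0$ as noted in the discussion of (\ref{eqn:subframe}).

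First I would work inside $\C^r$. By Proposition~\ref{prop:coordframe}, $\Phi_W$ is a frame for $\C^r$, so its frame operator $S_W$ is invertible (property (i) applied to the genuine frame $\Phi_W$ in $\C^r$). Hence $S_W^{-1}$ is an invertible linear map on $\C^r$. Since $\mathrm{span}(\Phi_W) = \C^r$ and an invertible linear map carries a spanning set to a spanning set, the dual frame $\tilde{\Phi}_W = S_W^{-1}\Phi_W$ again satisfies $\mathrm{span}(\tilde{\Phi}_W) = \C^r$; that is, $\tilde{\Phi}_W$ is itself a frame for $\C^r$.

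Next I would push this forward through $W_{on}$. Because the columns of $W_{on}$ are orthonormal, the map $W_{on}:\C^r \rightarrow \C^N$ is injective with $\mathrm{image}(W_{on}) = W$. Applying an injective linear map to a spanning set of its domain produces a spanning set of its image, so
\[
\mathrm{span}(\tilde{\Phi}) = \mathrm{span}(W_{on}\tilde{\Phi}_W) = W_{on}\bigl(\mathrm{span}(\tilde{\Phi}_W)\bigr) = W_{on}(\C^r) = W.
\]
This gives $\mathrm{span}(\tilde{\Phi}) = W$, and hence $\tilde{\Phi}$ is a subspace frame for $W$.

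The only real subtlety, and the point where the argument must be handled carefully, is bookkeeping about which space each operator lives on. The operator $S_W^{-1}$ acts on $\C^r$ and is genuinely invertible there, whereas the ambient frame operator $S$ is \emph{not} invertible on $\C^N$ (this is exactly observation (c) above). The proof must therefore invoke invertibility of $S_W$, which is legitimate by Proposition~\ref{prop:coordframe}, and must never appeal to invertibility of $S$. Once that distinction is respected, the remainder is the routine linear-algebra fact that invertible and injective maps send spanning sets to spanning sets.
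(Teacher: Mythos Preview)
Your argument is correct and is exactly the unpacking of the paper's one-line proof, which simply reads ``This follows from Proposition~\ref{prop:coordframe}.'' You have supplied the details the paper leaves implicit: Proposition~\ref{prop:coordframe} gives that $S_W$ is invertible, so $\tilde{\Phi}_W$ spans $\C^r$, and then $W_{on}$ carries this to a spanning set of $W$.
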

\begin{proof}
This follows from proposition \ref{prop:coordframe}.
\end{proof}
\begin{theorem}\label{thm:dual subframe}
Every $f \in \mathrm{W}$ can be represented as
\begin{equation*}
f = \sum_{j=1}^s \langle f,\tilde{\varphi}_j\rangle\varphi_j = \sum_{j=1}^s \langle f,\varphi_j\rangle \tilde{\varphi}_j.
\end{equation*}
\end{theorem}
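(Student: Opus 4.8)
The plan is to transport the problem into the coordinate space $\C^r$, where $\Phi_W$ is an honest frame (Proposition \ref{prop:coordframe}) with invertible frame operator $S_W$, so that the ordinary representation formula (\ref{eqn:rep formula}) applies verbatim. Writing $\phi_j := W_{on}^{\star}\varphi_j$ for the $j$th column of $\Phi_W$ and $\tilde{\phi}_j := S_W^{-1}\phi_j$ for its dual, this gives, for every $g \in \C^r$,
\[
g = \sum_{j=1}^s \langle g,\tilde{\phi}_j\rangle \phi_j = \sum_{j=1}^s \langle g,\phi_j\rangle \tilde{\phi}_j .
\]
The strategy is then to pull this identity back to $W$ by applying $W_{on}$.

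First I would record two elementary facts about $W_{on}$. Since its columns are orthonormal, $W_{on}^{\star}W_{on} = I_r$, so $W_{on}$ is an isometry of $\C^r$ onto $W$. Consequently $P := W_{on}W_{on}^{\star}$ is the orthogonal projection of $\C^N$ onto $W$, and because each $\varphi_j \in W = \mathrm{span}(\Phi)$ we have $P\varphi_j = \varphi_j$, i.e. $W_{on}\phi_j = \varphi_j$. By the definition (\ref{eqn:dual subframe}) we also have $\tilde{\varphi}_j = W_{on}\tilde{\phi}_j$. Thus both $\Phi$ and $\tilde{\Phi}$ are the images under $W_{on}$ of the coordinate frame and its dual.

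Next, given $f \in W$ I would choose $g \in \C^r$ with $f = W_{on}g$ and translate every inner product using the isometry: $\langle f,\varphi_j\rangle = \langle W_{on}g, W_{on}\phi_j\rangle = \langle g,\phi_j\rangle$, and likewise $\langle f,\tilde{\varphi}_j\rangle = \langle g,\tilde{\phi}_j\rangle$. Substituting these into the two sums and using $W_{on}\phi_j = \varphi_j$ and $W_{on}\tilde{\phi}_j = \tilde{\varphi}_j$, each reconstruction sum becomes $W_{on}$ applied to the corresponding $\C^r$-sum, which by the displayed identity above equals $W_{on}g = f$. This yields both equalities simultaneously.

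The computation is routine once the right bookkeeping is in place; the only point requiring care — the main (small) obstacle — is the identity $W_{on}\phi_j = \varphi_j$, which is where the hypothesis $\mathrm{span}(\Phi) = W$ (as opposed to merely $\mathrm{span}(\Phi)\supseteq W$, cf.\ the remark after (\ref{eqn:subframe})) is genuinely used: it guarantees that each $\varphi_j$ lies in $W$ and is therefore fixed by the projection $P$, so that passing to coordinates and back loses no information. Without it, $W_{on}W_{on}^{\star}\varphi_j$ would only be the projection of $\varphi_j$, and the reconstruction would recover $Pf$ rather than $f$.
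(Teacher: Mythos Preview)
Your proof is correct and follows essentially the same strategy as the paper: write $f = W_{on}g$, work in $\C^r$, and use that $W_{on}W_{on}^{\star}$ acts as the identity on $W$. The only cosmetic difference is that you invoke the standard reconstruction formula (\ref{eqn:rep formula}) for the frame $\Phi_W$ as a black box and then push it forward through the isometry $W_{on}$, whereas the paper carries out the equivalent computation directly in matrix form by expanding $\Phi\tilde{\Phi}^{\star}f$ and $\tilde{\Phi}\Phi^{\star}f$ and cancelling $S_W S_W^{-1}$; your version makes the dependence on (\ref{eqn:rep formula}) and on the hypothesis $\varphi_j \in W$ more transparent.
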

\begin{proof}
The first representation formula is $\Phi\tilde{\Phi}^{\star}f = f$ for all $f \in W$. Letting $f = W_{on}g$ for some $g \in \C^r$, we have:
\begin{eqnarray}
\Phi\tilde{\Phi}^{\star}f & = & \Phi(W_{on}S_W^{-1}W_{on}^{\star}\Phi)^{\star}f \nonumber\\
& = & \Phi\Phi^{\star}W_{on}(S_W^{-1})^{\star}W_{on}^{\star}(W_{on}g) \nonumber\\
& = & SW_{on}S_W^{-1}g \nonumber\\
& = & SW_{on}(W_{on}^{\star}SW_{on})^{-1}g \label{pfln:repform1}
\end{eqnarray}
Since $W_{on}W_{on}^{\star}$ is the identity on $W$,
\begin{eqnarray*}
(\ref{pfln:repform1}) & = & W_{on}W_{on}^{\star}SW_{on}(W_{on}^{\star}SW_{on})^{-1}g\\
& = & W_{on}g\\
& = & f
\end{eqnarray*}
The second representation formula is $\tilde{\Phi}\Phi^{\star}f = f$ for all $f \in W$.
\begin{eqnarray*}
\tilde{\Phi}\Phi^{\star}f & = & (W_{on}S_W^{-1}W_{on}^{\star}\Phi)\Phi^{\star}f\\
& = & W_{on}(W_{on}^{\star}SW_{on})^{-1}W_{on}^{\star}SW_{on}g\\
& = & W_{on}g\\
& = & f
\end{eqnarray*}
\end{proof}
The following commutative diagram illustrates the above ideas:\\\\
\begin{displaymath}
\footnotesize
\xymatrix{
*+[F]{\txt{$\Phi$ subspace frame for  $W \subset \C^N$\\ (subspace FUNTF for $W \subset \C^N$)}} \ar@{-->}[rrr]^{W_{on}S_W^{-1}W_{on}^{\star}} \ar[ddd]_{W_{on}^{\star}} &&& *+[F]{\txt{$\tilde{\Phi}$ subspace frame for  $W \subset \C^N$\\ (subspace FUNTF for $W \subset \C^N$)}}\\\\\\
*+[F]{\txt{$\Phi_W$ frame for $\C^r$\\ (FUNTF for $\C^r$)}} \ar[rrr]^{S_W^{-1}} &&& *+[F]{\txt{$\tilde{\Phi}_W$ frame for $\C^r$\\ (FUNTF for $\C^r$)}} \ar[uuu]_{W_{on}}
}
\end{displaymath}
\\

\section{Frame Potential}

Define the frame potential of a finite unit norm frame $\Phi = \{\varphi_j\}_{j=1}^s$ for $\C^N$ as:
\begin{equation*}
\mathrm{FP}(\Phi) := \sum_{j=1}^s \sum_{k=1}^s |\langle \varphi_j,\varphi_k \rangle|^2.
\end{equation*}

In \cite{benedetto:fntf} a characterization of FUNTFs is given in terms of the frame potential:
\begin{theorem}[Benedetto and Fickus 2002]\label{thm:fp}
For a given $N$ and $s$, let $S^{N-1}$ denote the unit sphere in $\C^N$ and consider:
\begin{displaymath}
\mathrm{FP}:\underbrace{S^{N-1} \times \cdots \times S^{N-1}}_{s \text{ times}} \rightarrow [0,\infty).
\end{displaymath}
Then:
\begin{enumerate}
\item
Every local minimizer of the frame potential is also a global minimizer.
\item
If $s\leq N$, the minimum value of the frame potential is $s$, and the minimizers are precisely the orthonormal sequences in $\C^N$.
\item
If $s \geq N$, the minimum value of the frame potential is $s^2/N$, and the minimizer are precisely the FUNTFs for $\C^N$.
\end{enumerate}
\end{theorem}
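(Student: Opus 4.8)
The plan is to reduce the frame potential to a spectral quantity of the frame operator and then optimize over its eigenvalues subject to a trace constraint. Writing $S = \sum_{j=1}^s \varphi_j\varphi_j^{\star}$ for the frame operator, a direct computation gives $\mathrm{FP}(\Phi) = \mathrm{tr}(S^2)$: since $\mathrm{tr}(\varphi_j\varphi_k^{\star}) = \langle\varphi_j,\varphi_k\rangle$, expanding $\mathrm{tr}(S^2) = \sum_{j,k}\langle\varphi_k,\varphi_j\rangle\langle\varphi_j,\varphi_k\rangle$ recovers the defining double sum. Letting $\lambda_1,\dots,\lambda_N \ge 0$ denote the eigenvalues of the positive semidefinite self-adjoint operator $S$, we then have $\mathrm{FP}(\Phi) = \sum_{n=1}^N \lambda_n^2$, while the unit-norm hypothesis forces $\sum_n \lambda_n = \mathrm{tr}(S) = \sum_j \lVert\varphi_j\rVert^2 = s$.

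With this reformulation the global statements (2) and (3) follow from elementary estimates. For $s \le N$, I would bound $\mathrm{FP}$ below by its diagonal terms, $\mathrm{FP}(\Phi) \ge \sum_j \lVert\varphi_j\rVert^4 = s$, with equality exactly when every off-diagonal inner product vanishes, i.e. when $\Phi$ is an orthonormal sequence (which exists precisely because $s \le N$). For $s \ge N$, convexity of $t \mapsto t^2$ (equivalently Cauchy--Schwarz) gives $\sum_n \lambda_n^2 \ge \left(\sum_n\lambda_n\right)^2/N = s^2/N$, with equality iff all $\lambda_n = s/N$, that is, iff $S = (s/N)I$. By property (iii) this says exactly that $\Phi$ is tight, so together with unit norm the equality case is precisely the FUNTFs. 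Compactness of the product of spheres guarantees the infimum is attained; that the attained value actually equals these lower bounds will drop out of the critical-point analysis below.

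The substance of the theorem is part (1), via single-vector variation. Fixing all vectors but $\varphi_k$ and writing $S_k = S - \varphi_k\varphi_k^{\star}$, one expands
\[ \mathrm{FP}(\Phi) = \mathrm{tr}(S_k^2) + 2\langle S_k\varphi_k,\varphi_k\rangle + \lVert \varphi_k\rVert^4. \]
With $\lVert\varphi_k\rVert = 1$ held fixed, minimizing in $\varphi_k$ over the sphere minimizes the Rayleigh quotient $\langle S_k\varphi_k,\varphi_k\rangle$, whose local minimizers on the sphere are exactly the unit eigenvectors of $S_k$ for its least eigenvalue. Hence at any local minimizer of $\mathrm{FP}$ each $\varphi_k$ is a least-eigenvalue eigenvector of $S_k$, and since $S = S_k + \varphi_k\varphi_k^{\star}$ it is then an eigenvector of $S$ as well. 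Consequently the index set partitions according to the eigenspaces $E_\mu$ of $S$, and because the $\varphi_j$ span $\C^N$ while those outside $E_\mu$ are orthogonal to it, the sub-collection lying in $E_\mu$ is itself a unit-norm tight frame for $E_\mu$; counting traces gives $\mu = n_\mu/d_\mu$, where $n_\mu$ and $d_\mu$ are the number of $\varphi_j$ in $E_\mu$ and the dimension of $E_\mu$.

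To finish part (1) I would feed the eigenvector condition back into the deflation: removing $\varphi_k \in E_\mu$ lowers its eigenvalue from $\mu$ to $\mu - 1$, so the least-eigenvalue requirement forces $\mu - 1 \le \nu$ for every other occupied eigenvalue $\nu$, whence $\lambda_{\max} - \lambda_{\min} \le 1$. Such spectral constraints sharply restrict local minimizers, but converting them into the exact statement---that a local minimizer must be tight when $s \ge N$ and orthonormal when $s \le N$---is the genuinely delicate part. Concretely, one must show that any critical configuration that is not already a global minimizer admits an explicit admissible perturbation (for instance a joint rotation transferring weight from a higher eigenspace to a lower one, or, when $s \le N$, rotating a vector out of the current span toward an unoccupied direction) along which $\mathrm{FP}$ strictly decreases. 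This perturbation analysis is the heart of the Benedetto--Fickus argument and is where essentially all the difficulty lies; everything else---the identity $\mathrm{FP}(\Phi)=\mathrm{tr}(S^2)$, the convexity lower bounds, and the equality characterizations---is routine once the reformulation is in hand. Once it is established, the attained minimum coincides with the lower bounds above, so every local minimizer is global and the stated minimum values and minimizer characterizations follow.
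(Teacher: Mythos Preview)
The paper does not prove this theorem. It is stated as a result of Benedetto and Fickus and cited from \cite{benedetto:fntf}; the paper then invokes it as a black box in the proof of Theorem~\ref{thm:sfp}. So there is no proof in the paper to compare your proposal against.

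For what it is worth, your sketch follows the broad outline of the original Benedetto--Fickus argument: rewrite $\mathrm{FP}(\Phi)=\mathrm{tr}(S^2)$, reduce to an eigenvalue problem under the constraint $\mathrm{tr}(S)=s$, and analyze critical points via single-vector variations showing each $\varphi_k$ is an eigenvector of $S$. You are candid that the decisive perturbation step for part~(1) is left incomplete, and that is indeed where essentially all of the work lies. One small slip: you assert that at a local minimizer ``the $\varphi_j$ span $\C^N$,'' but this need not hold (orthonormal sequences with $s<N$, or any configuration with coincident vectors, already fail it); the correct statement is that the $\varphi_j$ span the range of $S$, and the eigenspace decomposition should be carried out over the nonzero eigenvalues only, with the kernel of $S$ treated separately.
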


The following theorem is a trivial generalization of theorem \ref{thm:fp}:\\

\begin{theorem}\label{thm:sfp}
For a given $s$ and $N$, let $W$ be a subspace of $\C^N$ of dimension $r < N$ and consider the resctricted frame potential:
\begin{displaymath}
\mathrm{FP}|_W:(\underbrace{S^{N-1} \times \cdots \times S^{N-1}}_{s \text{ times}}) \cap W \rightarrow [0,\infty).
\end{displaymath}
Then:
\begin{enumerate}
\item
Every local minimizer of the restricted frame potential is also a global minimizer.
\item
If $s\leq r$, the minimum value of the restricted frame potential is $s$, and the minimizers are precisely the orthonormal sequences in $W$.
\item
If $s \geq r$, the minimum value of the restricted frame potential is $s^2/r$, and the minimizer are precisely the subspace FUNTFs for $W$.
\end{enumerate}
\end{theorem}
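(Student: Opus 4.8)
The plan is to reduce the statement directly to Theorem \ref{thm:fp} by exploiting the isometry between $W$ and $\C^r$ furnished by $W_{on}$. The central observation is that for any two vectors $\varphi_j = W_{on} g_j$ and $\varphi_k = W_{on} g_k$ with $g_j, g_k \in \C^r$, the relation $W_{on}^{\star} W_{on} = I_r$ gives $\langle \varphi_j, \varphi_k\rangle = \langle W_{on} g_j, W_{on} g_k\rangle = \langle g_j, g_k\rangle$. In particular $\lVert \varphi_j\rVert = \lVert g_j\rVert$, so $\varphi_j$ lies on $S^{N-1}\cap W$ precisely when $g_j = W_{on}^{\star}\varphi_j$ lies on $S^{r-1}$. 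Thus $\varphi \mapsto W_{on}^{\star}\varphi$ is a bijection, indeed an isometry, from $(S^{N-1}\cap W)^s$ onto $(S^{r-1})^s$, and it satisfies
\[
\mathrm{FP}|_W(\{\varphi_j\}_{j=1}^s) = \sum_{j=1}^s\sum_{k=1}^s |\langle \varphi_j,\varphi_k\rangle|^2 = \sum_{j=1}^s\sum_{k=1}^s |\langle g_j,g_k\rangle|^2 = \mathrm{FP}(\{g_j\}_{j=1}^s).
\]

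First I would record that $S^{N-1}\cap W$ is exactly the unit sphere of $W$ and that the displayed map is a homeomorphism of the two configuration spaces, with inverse $g \mapsto W_{on} g$; since it preserves the value of the frame potential, it carries (local or global) minimizers of $\mathrm{FP}|_W$ to minimizers of $\mathrm{FP}$ on $(S^{r-1})^s$, and conversely. Applying Theorem \ref{thm:fp} with $N$ replaced by $r$ then immediately yields part (1): every local minimizer of $\mathrm{FP}|_W$ pulls back to a local --- hence global --- minimizer on $(S^{r-1})^s$, which pushes forward to a global minimizer of $\mathrm{FP}|_W$. The minimum values $s$ (when $s \le r$) and $s^2/r$ (when $s \ge r$) transfer verbatim because the potential is preserved.

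It remains to translate the characterizations of the minimizers back to $W$. In the case $s \le r$, the minimizers on $(S^{r-1})^s$ are the orthonormal sequences in $\C^r$ by Theorem \ref{thm:fp}(2); since $W_{on}$ preserves inner products, $\{g_j\}$ is orthonormal in $\C^r$ if and only if $\{W_{on} g_j\} = \{\varphi_j\}$ is orthonormal in $W$, giving part (2). In the case $s \ge r$, Theorem \ref{thm:fp}(3) says the minimizers are the FUNTFs for $\C^r$; but the columns of $\Phi_W = W_{on}^{\star}\Phi$ are exactly the $g_j$, so $\{g_j\}$ is a FUNTF for $\C^r$ if and only if $\Phi$ is a subspace FUNTF for $W$ by Theorem \ref{thm:subfuntf equiv}, giving part (3).

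The one point requiring genuine care --- and the main (if modest) obstacle --- is verifying that the correspondence truly respects the \emph{local} structure: that $\varphi \mapsto W_{on}^{\star}\varphi$ is not merely a value-preserving bijection but a homeomorphism (in fact an isometry) of the constraint manifolds, so that neighborhoods correspond to neighborhoods and the notion of local minimizer is preserved in both directions. This is what makes part (1) a genuine consequence of Theorem \ref{thm:fp}(1) rather than only the statements about global minima. Everything else is a direct substitution, which is why the generalization is, as claimed, essentially trivial.
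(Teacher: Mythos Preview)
Your proof is correct and follows essentially the same strategy as the paper: transfer the problem to $\C^r$ via the isometry $W_{on}$, observe that the frame potential is preserved, apply Theorem~\ref{thm:fp} with $N$ replaced by $r$, and invoke Theorem~\ref{thm:subfuntf equiv} for the characterization in part~(3). The only cosmetic difference is that the paper verifies $\mathrm{FP}|_W(\Phi)=\mathrm{FP}(\Phi_W)$ via the trace identity $\mathrm{FP}(\Phi)=\mathrm{Tr}(S^2)$ rather than directly through the inner products, and your treatment of why \emph{local} minimizers transfer is in fact more explicit than the paper's.
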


Theorem \ref{thm:sfp} shows that the minimum value of the frame potential depends on the dimension of the subspace $W$.
\begin{proof}
Let $W_{on}$ be a set of $r$ orthonormal vectors such that $\mathrm{span}(W_{on}) = W$ and consider it as an $N \times r$ matrix. If $\Phi = \{\varphi_j\}_{j=1}^s$ is a finite unit norm set of vectors in $W$, then the coordinates of $\Phi$ in $W_{on}$ are given by the $r \times s$ matrix $\Phi_W = W_{on}^{\star}\Phi$. In \cite{benedetto:fntf} it is shown that $\mathrm{FP}(\Phi) = \mathrm{Tr}(S^2)$, where $S$ is the frame operator of $\Phi$. Using the previous two statements we then have:
\begin{eqnarray*}
\mathrm{FP}|_W(\Phi) & = & \mathrm{Tr}(S^2)\\
& = & \mathrm{Tr}([(W_{on}\Phi_W)(W_{on}\Phi_W)^{\star}]^2)\\
& = & \mathrm{Tr}([\Phi_W\Phi_W^{\star}]^2)\\
& = & \mathrm{Tr}(S_W^2)\\
& = & \mathrm{FP}(\Phi_W)
\end{eqnarray*}

Since $\Phi_W$ is a finite unit norm set of vectors in $\C^r$, we can apply theorem \ref{thm:fp} to get (1) and (2). Combining theorem \ref{thm:fp} along with theorem \ref{thm:subfuntf equiv} gives (3).
\end{proof}
\bigskip

\nocite{benedetto:fntf}
\nocite{kovacevic:lbbaf}

\bibliography{sensipBib3}

\end{document}